\documentclass[11pt]{amsart}
\usepackage[margin=1in]{geometry}
\usepackage{amsmath,amssymb,amsthm,mathtools}
\usepackage{microtype}
\usepackage[hidelinks,pdfusetitle]{hyperref}
\usepackage{etoolbox}

\makeatletter
\patchcmd{\@settitle}{\uppercasenonmath\@title}{}{}{}
\patchcmd{\@setauthors}{\MakeUppercase{\authors}}{\authors}{}{}
\makeatother
\newtheorem{theorem}{Theorem}[section]
\newtheorem{proposition}[theorem]{Proposition}
\newtheorem{lemma}[theorem]{Lemma}

\theoremstyle{remark}
\newtheorem{remark}[theorem]{Remark}

\newcommand{\supp}{\operatorname{supp}}
\newcommand{\wt}{\operatorname{wt}}
\newcommand{\bwt}{\operatorname{bwt}}
\newcommand{\ones}[1]{\mathbf{1}_{#1}}
\newcommand{\indvec}[1]{\boldsymbol{\chi}_{#1}}

\title{Discrete Unique Continuation on Simplex}
\author{Linjun Li\textsuperscript{*}}

\address{Independent Researcher, New York, NY, USA}
\email{linjunhomework@gmail.com}
\thanks{\textsuperscript{*}This project originated in work begun while the author was
a Ph.D. student in the Department of Mathematics at the University
of Pennsylvania.}
\date{}
\hypersetup{pdfauthor={Linjun Li}}

\begin{document}

\begin{abstract}
For integers $N\ge0$ and $n\ge2$, let
\[
\Delta_N^{(n)}
=\left\{\alpha\in\mathbb Z_{\ge 0}^n:
\alpha_1+\cdots+\alpha_n=N\right\}.
\]
We formulate a discrete unique-continuation problem on this lattice simplex.
Given an integer $R\ge1$, consider a function
$g:\Delta_{nR}^{(n)}\to\mathbb R$ satisfying the complete
oriented-simplex relations
\[
\sum_{i=1}^n g(\beta+e_i)=0,
\qquad \beta\in\Delta_{nR-1}^{(n)},
\]
where $e_i$ is the $i$th standard basis vector. We prove that a nonzero value
at the balanced point forces the support-cardinality estimate with optimal
growth exponent
\[
g(R,\ldots,R)\ne0
\quad\Longrightarrow\quad
|\supp g|\ge c_n R^{\lceil n/2\rceil}.
\]
Here $c_n>0$ depends only on $n$.
The key input is a \emph{Pascal uncertainty principle}. After
factorial normalization, the simplex relations become a single directional
differential equation. A nonzero balanced coefficient then produces a
monomial whose relevant facet-chart exponents are all large, while the
tensorized Pascal uncertainty principle
prevents the coefficient supports in all partially shifted affine charts from
being simultaneously sparse. Comparing those charts with two coordinate facets and
summing over disjoint derivative shells gives the lower bound. Explicit
constructions show that the exponent $\lceil n/2\rceil$ is optimal.
The proof was obtained through human-guided discovery and exploration with
the assistance of GPT-5.6 Sol.
\end{abstract}

\maketitle
\pagestyle{plain}
\section{Introduction}

Unique continuation asks how nonvanishing at one location constrains a
solution elsewhere. For continuum Schr\"odinger equations under standard
hypotheses, a nontrivial solution cannot vanish on an open set. A literal
analogue fails on lattices:
solutions of discrete Schr\"odinger equations may be supported on
lower-dimensional subsets. Quantitative substitutes therefore ask how large
the support, or a suitable large-value set, must be once one value is fixed to
be nonzero; see \cite{LiZhang3D,LiSupport2026}. The present paper studies this
question for a finite system of oriented relations on the integer points of a
simplex.

More precisely, for integers $n\ge2$ and $R\ge1$, we consider functions on
the lattice simplex
\[
\Delta_{nR}^{(n)}
:=\{\alpha\in\mathbb Z_{\ge0}^n:\alpha_1+\cdots+\alpha_n=nR\}
\]
satisfying
\[
\sum_{i=1}^n g(\beta+e_i)=0,
\qquad \beta\in\Delta_{nR-1}^{(n)}.
\]
Our main theorem shows that nonvanishing at the balanced point
$R\ones{n}=(R,\ldots,R)$ forces
\[
|\supp g|\ge c_nR^{\lceil n/2\rceil},
\]
where $c_n>0$ depends only on $n$, and that the exponent is optimal. The exact
statement appears in Theorem~\ref{thm:main}.

One source of this question is Anderson--Bernoulli localization. The Anderson
tight-binding model, introduced by Anderson to describe transport in a
disordered lattice \cite{Anderson1958}, is the random Schr\"odinger operator
\[
H_\omega=-\Delta+\delta V_\omega
\quad\text{on }\ell^2(\mathbb Z^D),
\]
where the on-site values of $V_\omega$ are independent random variables.
Foundational rigorous developments include the one-dimensional work of Kunz
and Souillard \cite{KunzSouillard1980}, the multiscale analysis of Fr\"ohlich
and Spencer \cite{FrohlichSpencer1983}, the constructive localization theorem
of Fr\"ohlich, Martinelli, Scoppola, and Spencer
\cite{FrohlichEtAl1985}, and the fractional-moment method of Aizenman and
Molchanov \cite{AizenmanMolchanov1993}. A later proof of spectral and dynamical
localization for the one-dimensional lattice Anderson model uses positivity
and large deviations for the Lyapunov exponent \cite{BucajEtAl2019}. In much
of this theory, a crucial
probabilistic input is a Wegner estimate, whose classical density-based form
uses regularity of the single-site distribution \cite{Wegner1981}.

The Bernoulli model is especially delicate because its single-site law has
only two atoms. It has no density, so the usual density-based
spectral-averaging route to a Wegner estimate is not directly available.
Carmona, Klein, and Martinelli proved one-dimensional localization for
Bernoulli and other singular potentials
\cite{CarmonaKleinMartinelli1987}. Damanik, Sims, and Stolz established
localization for one-dimensional continuum Bernoulli--Anderson models
\cite{DamanikSimsStolz2002}. The harmonic-analytic difficulties of the model
also motivated uncertainty-principle questions concerning the simultaneous
concentration of a function and its Fourier transform
\cite{ShubinVakilianWolff1998}. In higher-dimensional continuous space,
Bourgain and Kenig combined quantitative unique continuation with
combinatorial control of free sites to prove Anderson--Bernoulli localization
near the bottom of the spectrum \cite{BourgainKenig2005}. For continuous
Anderson Hamiltonians with singular single-site distributions, Germinet and
Klein established a strong form of localization at the bottom of the spectrum
without assuming regularity of the distribution \cite{GerminetKlein2013}.

A parallel continuum progress comes from Landis' conjecture, which asks how
rapid decay at infinity can force a solution of $\Delta u+Vu=0$ to vanish.
Meshkov exhibited a bounded complex potential and a nonzero solution bounded by
$C\exp(-c|x|^{4/3})$, and showed that the exponent $4/3$ is sharp in the
complex-valued setting \cite{Meshkov1992}. More
recently, Logunov, Malinnikova, Nadirashvili, and Nazarov proved for the real-valued Schr\"odinger equation that
$|u(x)|\leq\exp(-C|x|\sqrt{\log |x|})$ for all $|x|>2$, with $C$ sufficiently
large, forces $u\equiv0$ \cite{LogunovEtAl2025}. 

Passing from the continuum to a lattice introduces a second obstruction: a
literal unique-continuation principle fails because solutions of discrete
Schr\"odinger equations may be supported on lower-dimensional sets. The
polynomial structure of discrete harmonic functions was exploited by
Buhovsky, Logunov, Malinnikova, and Sodin \cite{BLMS2022}. Inspired by their Liouville theorem, Ding and
Smart proved localization near the spectral edge on $\mathbb Z^2$ using a
randomized discrete unique-continuation estimate \cite{DingSmart2020}. Li
proved large-disorder localization for the symmetric
Bernoulli potential on $\mathbb Z^2$ away from small neighborhoods of finitely
many exceptional energies \cite{Li2D2022}. Li and Zhang established
three-dimensional Anderson--Bernoulli localization near the spectral edge,
with a deterministic discrete unique-continuation principle as the main
geometric input \cite{LiZhang3D}; the two- and three-dimensional developments
are also treated in \cite{LiThesis2022}.

The triangular-lattice estimate inside the Li--Zhang argument is the
three-variable predecessor of the simplex system considered here. An explicit
affine bijection transports the $n=3$ simplex relation to the three-term
relation in \cite[Theorem~1.9]{LiZhang3D}; hence that result already implies the
quadratic support bound, with additional quantitative control of magnitudes.
Remark~\ref{rem:n3} records the precise identification. The theorem below
extends the support conclusion to all $n$ by an algebraic argument valid in
every dimension and supplies matching constructions.

In the whole-space setting, Krymskii proved that every nonzero solution of a
discrete stationary Schr\"odinger equation on $\mathbb Z^D$ has support of
discrete dimension at least $\log_2D-7$ \cite{Krymskii2024}. For Dirichlet
solutions on finite boxes, the support-cardinality problem was studied in
\cite{LiSupport2026}. That work proves a dimension-reduction principle and,
in ambient dimension $D$, exhibits sparse solutions of order
$N^{\lceil D/2\rceil}$, motivating the question of whether this exponent can
be forced from local cancellation. Here the domain is instead the standard
lattice simplex $\Delta_N^{(n)}$, of intrinsic dimension $n-1$.
For each $\beta\in\Delta_{N-1}^{(n)}$, the points
$\beta+e_1,\ldots,\beta+e_n$ are the vertices of an elementary oriented
simplex, and we impose the complete relation
\[
\sum_{i=1}^n g(\beta+e_i)=0.
\]
We use the term \emph{discrete unique continuation on simplex} for the
resulting center-to-support principle. This structured system is motivated by,
but is not identified here with, an arbitrary discrete Schr\"odinger equation;
such an application would require an additional geometric embedding.

Recent progress on related disordered models includes results on both sides of
the conjectured transition for classes of one-dimensional random band matrices
(see \cite{YauYin2025,Drogin2025}). Classical random-dynamics analogues have
also seen recent progress (see
\cite{li2021manhattan,LiCylinder2026,ElboimGloriaHernandez2025}).

The proof has three ingredients. First, factorial normalization converts all simplex
relations into the differential identity
$\mathcal D F=0$, where
$\mathcal D=\partial_{z_1}+\cdots+\partial_{z_n}$. Second, the key
tensorized Pascal uncertainty principle says that a coefficient visible in one
chart forces a large total monomial
support among all partial unit shifts of that chart. Translation invariance in
the all-ones direction then compares those affine charts with any prescribed
pair of coordinate facets. Third, simultaneous differentiation in all
variables decomposes the full support into disjoint boundary shells, whose
lower bounds sum to the claimed exponent.

The paper is organized as follows. Section~\ref{sec:main-result} states the
theorem and gives a detailed proof overview. Section~\ref{sec:normalization} carries out
the factorial normalization. Sections~\ref{sec:pascal} and
\ref{sec:tensor-pascal} prove the one-variable and tensorized Pascal uncertainty
principles. Section~\ref{sec:boundary} establishes the two-facet estimate, and
Section~\ref{sec:shells} sums it over derivative shells. Finally,
Section~\ref{sec:sharpness} gives matching constructions.

\section*{Acknowledgments}

The author acknowledges the use of GPT-5.6 Sol in the human-guided discovery
and exploration of the proof. All arguments were independently verified by
the author.

\section{Main result}\label{sec:main-result}

For an integer $k\ge1$, write $[k]:=\{1,\ldots,k\}$. If
$\alpha=(\alpha_1,\ldots,\alpha_k)\in\mathbb Z_{\ge0}^k$ and
$z=(z_1,\ldots,z_k)$, we use the multi-index notation
\[
|\alpha|:=\sum_{i=1}^k\alpha_i,
\qquad
\alpha!:=\prod_{i=1}^k\alpha_i!,
\qquad
z^\alpha:=\prod_{i=1}^kz_i^{\alpha_i}.
\]
For any monomial $M$ in the displayed variables, $[M]Q$ denotes the
coefficient of $M$ in a polynomial $Q$; in particular, $[z^\alpha]Q$ is the
coefficient of $z^\alpha$.

For integers $N\ge0$ and $n\ge1$, write
\[
\Delta_N^{(n)}
:=\left\{\alpha\in\mathbb Z_{\ge0}^n:|\alpha|=N\right\}.
\]
Let $e_i$ denote the $i$th standard basis vector. We write
$\ones{k}:=(1,\ldots,1)\in\mathbb R^k$ for the all-ones vector in a specified
dimension $k$. For $S\subseteq[k]$, let $\indvec{S}\in\{0,1\}^k$ be its
indicator vector. For a function on a finite set, write
\[
\supp g:=\{\alpha:g(\alpha)\ne0\}.
\]
If $Q=\sum_\alpha c_\alpha z^\alpha$ is a polynomial, write
\[
\supp Q:=\{\alpha:c_\alpha\ne0\},
\qquad
\wt(Q):=|\supp Q|.
\]
When $N=nR$, we call
\[
R\ones{n}=(R,\ldots,R)\in\Delta_{nR}^{(n)}
\]
the \emph{balanced point}; the corresponding monomial
$z_1^R\cdots z_n^R$ is the \emph{balanced monomial}.

\begin{theorem}[Simplex unique continuation with optimal exponent]\label{thm:main}
Let $n\ge2$ and $R\ge1$ be integers. Suppose that
\[
g:\Delta_{nR}^{(n)}\longrightarrow\mathbb R
\]
satisfies
\begin{equation}\label{eq:simplex-relation}
\sum_{i=1}^n g(\beta+e_i)=0,
\qquad \beta\in\Delta_{nR-1}^{(n)},
\end{equation}
and
\begin{equation}\label{eq:center-nonzero}
g(R,\ldots,R)\ne0.
\end{equation}
Then
\begin{equation}\label{eq:general-lower-bound}
|\supp g|
\ge
\frac{R^{\lceil n/2\rceil}}
{2^{n-1}(n+1)^{\lfloor (n-2)/2\rfloor}\bigl(\lceil (n-2)/2\rceil+1\bigr)}
\end{equation}
for $n\ge3$. For $n=2$, one has the stronger exact bound
\[
|\supp g|=2R+1.
\]
In particular, for every fixed $n$ there is a constant $c_n>0$ such that
\[
|\supp g|\ge c_nR^{\lceil n/2\rceil}.
\]
The exponent $\lceil n/2\rceil$ is optimal.
\end{theorem}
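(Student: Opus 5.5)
The plan follows the outline in the introduction. \emph{Step 1: factorial normalization.} Set
\[
F(z)=\sum_{\alpha\in\Delta_{nR}^{(n)}} g(\alpha)\frac{z^\alpha}{\alpha!}.
\]
Then $[z^\beta/\beta!]\,\mathcal D F=\sum_i g(\beta+e_i)$, so \eqref{eq:simplex-relation} is equivalent to $\mathcal D F=0$ with $\mathcal D=\sum_i\partial_{z_i}$. Moreover $\supp g=\supp F$, and \eqref{eq:center-nonzero} says the balanced monomial appears in $F$.

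\emph{Case $n=2$.} Here $\mathcal D F=0$ forces $F=c(z_1-z_2)^{2R}$, that is, $g(\alpha)=c(-1)^{\alpha_1}$. Since $g(R,R)\ne0$ we get $c\ne0$, hence $|\supp g|=2R+1$.

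\emph{Case $n\ge3$.} A homogeneous $F$ is annihilated by $\mathcal D$ exactly when it is invariant under translation by $t\ones{n}$. So $F(z)=F(z-z_n\ones{n})$ is a polynomial in the differences $w_i=z_i-z_n$, and similarly for any other base index. I would use this to write $F$ in "facet charts" obtained by setting one variable to zero.

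The key lemma I would prove is a one-variable Pascal uncertainty principle. Let $p(x)$ be a polynomial containing the monomial $x^k$, and take the shifts $p(x+1)$ or $(1+x)^m$-type expansions. Then the supports of $p$ and its shift cannot both be small relative to $\min(k,\deg p-k)$. The mechanism is Descartes' rule of signs or a Vandermonde/Lucas argument: a polynomial with $s$ monomials has at most $s-1$ positive roots counted suitably, and this bounds the order of vanishing at $x=-1$.

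Tensorizing over $m=\lfloor(n-2)/2\rfloor$ pairs of variables should give the following. If a coefficient with all chart exponents $\ge R$ is visible, then among the $2^{\cdot}$ partially shifted charts the total monomial support is at least $\gtrsim R^{m}$. Translating back via the $\ones{n}$-invariance, this becomes a lower bound on the number of support points of $g$ on two coordinate facets $\{\alpha_a=0\}\cup\{\alpha_b=0\}$, or rather on low-order boundary layers. Here each shift by a unit vector corresponds to differentiating in one variable, which preserves $\mathcal D F=0$.

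\emph{Shells.} Apply the two-facet estimate to the derivatives $\partial^{j\ones{n}}F$ for $j=0,\dots,R/2$. These still satisfy $\mathcal D=0$ and still contain a balanced-type monomial with exponents $\ge R-j\ge R/2$. The facet support of $\partial^{j\ones{n}}F$ corresponds to points of $\supp g$ with $\min_i\alpha_i=j$, more or less, so these shells are disjoint. The shell bound must be $\gtrsim R^{\lceil n/2\rceil-1}$, which for odd $n$ needs an extra facet variable from the second chart. Summing over $\sim R$ shells gives $R^{\lceil n/2\rceil}$, and I would track the constants $2^{n-1}$, $(n+1)^{\lfloor(n-2)/2\rfloor}$ and $\lceil(n-2)/2\rceil+1$ from the number of charts, the pigeonholing, and the shell count.

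\emph{Sharpness.} Pair the variables and take
\[
F=\prod_{k}(z_{2k-1}-z_{2k})^{2R}.
\]
For odd $n$, instead take $(z_{n-2}-z_{n-1})^{a}(z_{n-1}-z_n)^{b}$-type factors, or a symmetrized three-variable block with $\sim R^2$ support. Each factor is killed by $\mathcal D$, so the product is too by the Leibniz rule. Its support has size $(2R+1)^{\lfloor n/2\rfloor}$ (times $R^2$ for the odd block), which is $O(R^{\lceil n/2\rceil})$, and the balanced coefficient is $\binom{2R}{R}^{\cdots}\ne0$.

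The main obstacle is the tensorized Pascal uncertainty principle with the correct exponent, in particular getting a gain of one power of $R$ per pair of variables rather than per variable. I would first try induction on the number of pairs, applying the one-variable principle coefficientwise in one pair of variables while fixing the others.
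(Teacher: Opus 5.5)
Your normalization, the $n=2$ case, the shell decomposition by $\partial_{z_1}^t\cdots\partial_{z_n}^t$, and the even-$n$ examples agree with the paper. The gap is the fixed-scale estimate that drives everything. If $\mathcal D H=0$, $\deg H=nr$ and $[z_1^r\cdots z_n^r]H\ne0$, one needs a lower bound of order $r^{\lceil (n-2)/2\rceil}$ for the support of $H$ on two coordinate facets, and your plan does not reach it.

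The first problem is the one-variable lemma. Tensorization needs the additive, fiberwise bound $\wt(q(y))+\wt(q(y+1))\ge\deg q+2\ge\kappa+2$ whenever $[y^\kappa]q\ne0$. A bound ``relative to $\min(k,\deg q-k)$'' gives nothing on a fiber whose $y$-degree equals the visible exponent. The mechanism you name also only bounds root multiplicity. If $q$ has $s$ terms, each $q^{(i)}$ vanishes at $1$ to order at most $s-1$, so every run of consecutive zero coefficients of $q(y+1)$ has length at most $s-1$. The zeros of a $q(y+1)$ with $t$ nonzero coefficients lie in at most $t$ runs, so this yields only $\wt(q)\cdot\wt(q(y+1))\ge\deg q+1$, i.e.\ a sum of order $\sqrt{\deg q}$. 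That is far too weak to tensorize. Zeros of $q(y+1)$ in arbitrary positions must be controlled. The paper does this with the Pascal-submatrix rank lemma: for $|I|=|J|$ and $L\in J$, column $L$ of $\mathcal P_L[I,J]$ is not in the span of the others. It proves this via Lindstr\"om--Gessel--Viennot positivity of $\det[\binom{j_b}{i_a}]$ when $i_a\le j_a$.

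The more serious gap is the passage from shifted charts to coordinate facets, and your bookkeeping puts the halving of the exponent in the wrong place. Take $d=n-2$ and $p(x)=H(x_1,\dots,x_d,1,0)$. Expanding $H(z)=H(z_1-z_n,\dots,z_{n-1}-z_n,0)$ gives a monomial of $p$ with all exponents $\ge r$. The tensorized principle then gains one power per variable: $\sum_{S\subseteq[d]}\wt(p(x+\indvec{S}))\ge(r+2)^d$.

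Only the two endpoint shifts are facets. $\wt(p)$ counts $\supp H\cap\{\alpha_n=0\}$. By translation invariance at $z_n=-1$, $\wt(p(x+\ones{d}))$ counts $\supp H\cap\{\alpha_{n-1}=0\}$. So the unit shifts come from this change of facet, not from differentiation.

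The intermediate shifts must be compressed to the endpoints via $\wt(p(x+\indvec{S}))\le(nr+1)^{\min(|S|,d-|S|)}\bigl(\wt(p)+\wt(p(x+\ones{d}))\bigr)$. This holds because translating $k$ coordinates turns each monomial into at most $(nr+1)^k$ monomials, and every $S$ is within Hamming distance $\lfloor d/2\rfloor$ of $\varnothing$ or $[d]$. The compression costs $r^{\lfloor d/2\rfloor}$ and leaves $r^{\lceil d/2\rceil}=r^{\lceil n/2\rceil-1}$ per shell for both parities.

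Seeking ``one power per pair'' inside the uncertainty principle gives at most $r^{\lfloor (n-2)/2\rfloor}$, which is one power short for odd $n$. The ``extra facet variable'' is not an argument, and you never explain how a total over shifts becomes facet support.

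Two smaller points. The stated constant needs all shells $t=0,\dots,R$ with bound $(r+2)^d/(2^{d+1}(nr+1)^{\lfloor d/2\rfloor})$, not only $j\le R/2$. For $n=2\nu+1$ the example uses $\nu-1$ pairs plus $(z_{n-2}-z_{n-1})^a(z_{n-1}-z_n)^b$ with $a,b\ge R$ and $a+b=3R$. Its support is $(2R+1)^{\nu-1}(a+1)(b+1)\asymp R^{\nu+1}$, not $(2R+1)^{\lfloor n/2\rfloor}R^2$.
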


\begin{remark}[The three-variable case and the Li--Zhang triangle]
\label{rem:n3}
Assume that $g$ satisfies the hypotheses of Theorem~\ref{thm:main} with
$n=3$, and let
\[
\xi=(-1,0),
\qquad
\eta=\left(\frac12,\frac{\sqrt3}{2}\right),
\qquad
\Psi_R(\alpha)=-(\alpha_2-R)\xi+(\alpha_3-R)\eta.
\]
This is an affine bijection from $\Delta_{3R}^{(3)}$ onto the equilateral
lattice triangle $T_{0;R}$ of \cite[Definition~1.8]{LiZhang3D}; it sends
$(R,R,R)$ to the origin, and its inverse is
\[
\Psi_R^{-1}(s\xi+t\eta)
=(R+s-t,\,R-s,\,R+t).
\]
Define $u:T_{0;R}\to\mathbb R$ by $u(\Psi_R(\alpha)):=g(\alpha)$.
If $a=s\xi+t\eta\in T_{0;\lfloor R/2\rfloor}$, set
$\alpha=\Psi_R^{-1}(a)$. Its first coordinate satisfies
\[
\alpha_1=R+s-t\ge R-\lfloor R/2\rfloor>0,
\]
so $\beta:=\alpha-e_1$ belongs to $\Delta_{3R-1}^{(3)}$. Moreover,
\[
\Psi_R(\beta+e_1)=a,
\qquad
\Psi_R(\beta+e_2)=a-\xi,
\qquad
\Psi_R(\beta+e_3)=a+\eta.
\]
Hence the simplex relation becomes
\[
u(a)+u(a-\xi)+u(a+\eta)=0
\]
on the inner triangle required in \cite[Theorem~1.9]{LiZhang3D}. That theorem
allows an error exponentially small in $R$, relative to $|u(0)|$, and still
forces a constant multiple of $R^2$ values to remain quantitatively large.
It therefore implies the $n=3$ quadratic support bound, with additional
large-value control.
\end{remark}

\subsection*{Detailed proof overview}

The case $n=2$ is an elementary alternating recurrence, so the main argument
concerns $n\ge3$. It has two levels: at a fixed scale we force many monomials
onto the boundary of a polynomial (the monomials having at least one zero
exponent), and then we sum this estimate over disjoint shells. If
\[
d=n-2,
\qquad
\ell_d=\left\lfloor\frac d2\right\rfloor,
\]
then the origin of the final exponent is already visible in the bookkeeping
\[
\underbrace{d}_{\text{tensor uncertainty}}
-
\underbrace{\ell_d}_{\text{two-facet comparison}}
+
\underbrace{1}_{\text{shell summation}}
=
\left\lceil\frac n2\right\rceil.
\]

\begin{enumerate}
\item \emph{Encode the simplex relations by a polynomial.}
The factorially normalized generating polynomial
\[
F(z)=\sum_{\alpha\in\Delta_{nR}^{(n)}}
g(\alpha)\frac{z^\alpha}{\alpha!}
\]
has the same support as $g$ and a nonzero balanced coefficient. The simplex
relations become
\[
\mathcal DF=0,
\qquad
\mathcal D=\partial_{z_1}+\cdots+\partial_{z_n},
\]
and therefore every $H\in\ker\mathcal D$ is invariant under diagonal
translation:
\[
H(z+\lambda\ones{n})=H(z)
\qquad(\lambda\in\mathbb R).
\]
See Lemma~\ref{lem:normalization}.

\item \emph{Apply the tensorized Pascal uncertainty principle.}
The Pascal-submatrix rank lemma gives the one-variable estimate
\[
\wt(q(y))+\wt(q(y+1))\ge \deg q+2.
\]
Iterating it over $d$ variables shows that $[x^\kappa]p\ne0$ implies
\[
\sum_{S\subseteq[d]}\wt\bigl(p(x+\indvec{S})\bigr)
\ge \prod_{i=1}^d(\kappa_i+2).
\]
These are Lemmas~\ref{lem:one-dimensional} and~\ref{lem:tensor}.

\item \emph{Convert partial-shift uncertainty into a two-facet boundary
estimate.}
This is the core step. Fix $r\ge0$ and consider a homogeneous polynomial $H$
of degree $nr$ such that
\[
\mathcal DH=0,
\qquad
[z_1^r\cdots z_n^r]H\ne0.
\]

\smallskip
\emph{Facet reconstruction.}
The identity $\mathcal DH=0$ says that $H$ is constant on each line parallel to
$\ones{n}$. If
\[
B(u_1,\ldots,u_{n-1}):=H(u_1,\ldots,u_{n-1},0),
\]
then translation by $-z_n\ones{n}$ reconstructs the whole polynomial from the
facet $z_n=0$:
\[
H(z)=B(z_1-z_n,\ldots,z_{n-1}-z_n).
\]
Write $B(u)=\sum_{|\gamma|=nr}b_\gamma u^\gamma$. Expanding the balanced
coefficient gives
\[
[z_1^r\cdots z_n^r]H
=(-1)^r
\sum_{\substack{|\gamma|=nr\\ \gamma_i\ge r\ (i<n)}}
b_\gamma\prod_{i=1}^{n-1}\binom{\gamma_i}{r}.
\]
Since the left-hand side is nonzero, at least one summand is nonzero. Thus
$B$ contains a ``deep'' monomial $u^\gamma$ with
$b_\gamma\ne0$ and $\gamma_i\ge r$ for every $i<n$.

\smallskip
\emph{Tensorized Pascal uncertainty.}
Dehomogenize the last variable of $B$ by setting
\[
p(x_1,\ldots,x_d):=B(x_1,\ldots,x_d,1).
\]
No two monomials merge: because $B$ is homogeneous of degree $nr$, the
omitted exponent is uniquely recovered from the other $d$ exponents. The deep
monomial therefore survives in $p$, and the tensor estimate yields
\[
W_S:=\wt\bigl(p(x+\indvec{S})\bigr)
\qquad(S\subseteq[d]).
\]
With this notation, the estimate is
\[
\sum_{S\subseteq[d]}W_S
\ge\prod_{i=1}^d(\gamma_i+2)
\ge(r+2)^d.
\]

\smallskip
\emph{Compression to two genuine coordinate facets.}
The tensor estimate involves all $2^d$ partial shifts, but only its two
endpoint weights are exactly monomial counts on coordinate facets in the
original homogeneous coordinates. Namely,
\[
W_\varnothing=\wt(p)
\]
is the support size on $z_n=0$, whereas
\[
W_{[d]}=\wt\bigl(p(x+\ones{d})\bigr)
\]
is the support size on $z_{n-1}=0$. For the second identity, set
$z_{n-1}=0$ and $z_n=-1$ in the reconstruction formula; homogeneity again
ensures that dehomogenization does not merge monomials.

It remains to control the intermediate weights by these two endpoints. Put
$\Lambda_r:=nr+1$. If $|S|=k$, translating the $k$ coordinates in $S$
expands each monomial of $p$ into at most $\Lambda_r^k$ monomials, so
\[
W_S\le\Lambda_r^kW_\varnothing.
\]
Starting instead from $p(x+\ones{d})$ and translating the $d-k$
complementary coordinates back by $-1$ gives
\[
W_S\le\Lambda_r^{d-k}W_{[d]}.
\]
Because every subset is within Hamming distance at most
$\ell_d=\lfloor d/2\rfloor$ of one of the two endpoints,
\[
W_S\le\Lambda_r^{\ell_d}(W_\varnothing+W_{[d]}).
\]
Summing this over the $2^d$ subsets and comparing with the tensor lower bound
gives
\[
W_\varnothing+W_{[d]}
\ge
\frac{(r+2)^d}{2^d(nr+1)^{\ell_d}}.
\]

Finally, let $\bwt(H)$ be the number of supported monomials of $H$ having at
least one zero exponent. The union of the two coordinate-facet supports lies
in this boundary, and its size is at least
$\max(W_\varnothing,W_{[d]})\ge(W_\varnothing+W_{[d]})/2$. Hence, for a
constant $c_n^{\mathrm{bd}}>0$ depending only on $n$,
\[
\bwt(H)
\ge
\frac{(r+2)^d}{2^{d+1}(nr+1)^{\ell_d}}
\ge c_n^{\mathrm{bd}} r^{d-\ell_d}
=c_n^{\mathrm{bd}} r^{\lceil(n-2)/2\rceil}
\qquad(r\ge1).
\]
Conceptually, tensorized Pascal uncertainty first produces order $r^d$ total weight
across all partial shifts. Compressing those shifts to the nearer endpoint
costs at most $r^{\ell_d}$, leaving order $r^{d-\ell_d}$ monomials on the
actual boundary.
The precise statement, including the explicit constants, is
Proposition~\ref{prop:boundary}.

\item \emph{Recover the full support by summing derivative shells.}
For $0\le t\le R$, define
\[
F_t=\partial_{z_1}^t\cdots\partial_{z_n}^tF,
\qquad r=R-t,
\]
where $r$ is the residual scale. Since $\mathcal D$ commutes with these
derivatives, each $F_t$ lies in $\ker\mathcal D$ and satisfies
\[
[z_1^r\cdots z_n^r]F_t\ne0.
\]
In particular, $F_R$ is a nonzero constant. Differentiation subtracts
$t\ones{n}$ injectively from every surviving exponent, so $\bwt(F_t)$ counts
exactly the shell $\{\alpha\in\supp F:\min_i\alpha_i=t\}$. These shells are
disjoint and exhaustive. Applying the fixed-scale estimate and summing gives
\[
\wt(F)
=\sum_{t=0}^R \bwt(F_t)
\ge c_n^{\mathrm{bd}}
\sum_{r=1}^R r^{\lceil(n-2)/2\rceil}
\ge c_n R^{\lceil n/2\rceil}.
\]
Since $\wt(F)=|\supp g|$, this is the desired exponent.

\item \emph{Show that the exponent is optimal.}
Products of powers of coordinate differences lie in $\ker\mathcal D$. Pairing
variables gives support of order $R^{n/2}$ for even $n$; for odd $n$, a final
three-variable block gives order $R^{(n+1)/2}$. In both cases the balanced
coefficient is nonzero, and inverse factorial normalization produces an
admissible $g$. See Proposition~\ref{prop:sharp}.
\end{enumerate}

\section{Factorial normalization}\label{sec:normalization}

Using the multi-index notation fixed above, define
\begin{equation}\label{eq:F-def}
F(z_1,\ldots,z_n)
:=
\sum_{\alpha\in\Delta_{nR}^{(n)}}g(\alpha)\frac{z^\alpha}{\alpha!}.
\end{equation}
Let
\[
\mathcal D:=\partial_{z_1}+\cdots+\partial_{z_n}.
\]

\begin{lemma}\label{lem:normalization}
The polynomial $F$ is homogeneous of degree $nR$, satisfies $\mathcal DF=0$, has
\[
[z_1^R\cdots z_n^R]F=\frac{g(R,\ldots,R)}{(R!)^n}\ne0,
\]
and obeys
\[
\wt(F)=|\supp g|.
\]
\end{lemma}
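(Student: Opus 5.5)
The plan is to verify the four claims of Lemma~\ref{lem:normalization} directly from the definition \eqref{eq:F-def}. None of them needs any earlier result beyond the definitions.

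\emph{Homogeneity.} Every $\alpha\in\Delta_{nR}^{(n)}$ has $|\alpha|=nR$. Hence each monomial $z^\alpha$ in \eqref{eq:F-def} has degree $nR$, and $F$ is homogeneous of degree $nR$.

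\emph{Coefficients and support.} Distinct multi-indices give distinct monomials, so $[z^\alpha]F=g(\alpha)/\alpha!$ for each $\alpha\in\Delta_{nR}^{(n)}$. Since $\alpha!>0$, this coefficient is nonzero exactly when $g(\alpha)\ne0$. Thus $\supp F=\supp g$ as subsets of $\Delta_{nR}^{(n)}$, which gives $\wt(F)=|\supp g|$. Specializing to $\alpha=R\ones{n}$, where $\alpha!=(R!)^n$, gives the stated balanced coefficient $g(R,\ldots,R)/(R!)^n$. It is nonzero by \eqref{eq:center-nonzero}.

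\emph{The differential equation.} This is the only step with any content, and it is the point of the factorial normalization. For a single term we have
\[
\partial_{z_i}\frac{z^\alpha}{\alpha!}=\frac{z^{\alpha-e_i}}{(\alpha-e_i)!}
\]
when $\alpha_i\ge1$, and the derivative is $0$ when $\alpha_i=0$. Therefore $\partial_{z_i}F=\sum_{\beta}g(\beta+e_i)\,z^\beta/\beta!$, summed over $\beta\in\Delta_{nR-1}^{(n)}$. Here the reindexing $\beta=\alpha-e_i$ is a bijection from $\{\alpha\in\Delta_{nR}^{(n)}:\alpha_i\ge1\}$ onto $\Delta_{nR-1}^{(n)}$, since $\beta+e_i$ always lies in $\Delta_{nR}^{(n)}$.

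Summing over $i$ gives
\[
\mathcal DF=\sum_{\beta\in\Delta_{nR-1}^{(n)}}\Bigl(\sum_{i=1}^n g(\beta+e_i)\Bigr)\frac{z^\beta}{\beta!}.
\]
Each inner sum vanishes by \eqref{eq:simplex-relation}, so $\mathcal DF=0$. The main care point is the bookkeeping of this reindexing. In particular, the boundary terms with $\alpha_i=0$ simply drop out, and the bijection ensures that every $\beta$ receives exactly the $n$ contributions $g(\beta+e_1),\ldots,g(\beta+e_n)$.
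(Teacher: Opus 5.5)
Your proposal is correct and follows the same approach as the paper: you differentiate term by term using the factorial normalization, reindex via $\beta=\alpha-e_i$ onto $\Delta_{nR-1}^{(n)}$ so that the simplex relations kill each coefficient of $\mathcal DF$, and read off homogeneity, support, and the balanced coefficient directly from the definition of $F$. Your explicit check that the reindexing is a bijection is slightly more detailed than the paper's one-line computation, but it is the same argument.
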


\begin{proof}
Homogeneity and preservation of support are immediate from \eqref{eq:F-def}. Moreover,
\begin{align*}
\mathcal DF
&=
\sum_{\alpha\in\Delta_{nR}^{(n)}}
\sum_{i:\alpha_i>0}
 g(\alpha)\frac{z^{\alpha-e_i}}{(\alpha-e_i)!}\\
&=
\sum_{\beta\in\Delta_{nR-1}^{(n)}}
\left(\sum_{i=1}^n g(\beta+e_i)\right)
\frac{z^\beta}{\beta!}=0
\end{align*}
by \eqref{eq:simplex-relation}. The balanced coefficient formula follows
directly from \eqref{eq:F-def} and \eqref{eq:center-nonzero}.
\end{proof}

Thus it remains to prove a monomial-support bound for homogeneous polynomials
in $\ker\mathcal D$ with a nonzero balanced coefficient.

\section{The one-variable Pascal uncertainty principle}\label{sec:pascal}

For an integer $L\ge0$, let
\[
\mathcal P_L=\left(\binom{j}{i}\right)_{0\le i,j\le L}.
\]
We first record the precise rank property needed below.

\begin{lemma}[Pascal submatrix rank]\label{lem:pascal-rank}
Let $I,J\subseteq\{0,1,\ldots,L\}$. Form the bipartite graph with left
vertex set $I$, right vertex set $J$, and an edge $i\sim j$ exactly when
$i\le j$. Here $\mathcal P_L[I,J]$ denotes the submatrix with row set $I$ and
column set $J$. Then
\[
\operatorname{rank} \mathcal P_L[I,J]
=
\mu(I,J),
\]
where $\mu(I,J)$ is the maximum matching size of this graph.
\end{lemma}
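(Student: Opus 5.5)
The plan is to prove the two inequalities separately. The upper bound is $\operatorname{rank}\le\mu(I,J)$, and the lower bound is a nonsingular square submatrix of size $\mu(I,J)$.

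\emph{Upper bound.} Any nonzero term in the Leibniz expansion of a square minor $\mathcal P_L[I',J']$ is a bijection $I'\to J'$ with $i\le\sigma(i)$, since $\binom{j}{i}=0$ when $i>j$. Such a bijection is a perfect matching of the induced subgraph. Hence every nonzero minor has size at most $\mu(I,J)$, which gives $\operatorname{rank}\mathcal P_L[I,J]\le\mu(I,J)$.

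\emph{Lower bound.} I would first choose a maximum matching of a canonical form. The graph is a ``threshold'' bipartite graph: the neighbourhood of $i$ is $\{j\in J:j\ge i\}$, and these neighbourhoods are nested. Greedy matching therefore works. Process the rows of $I$ in decreasing order, and match each row $i$ to the smallest unused $j\ge i$ in $J$; equivalently, use a standard exchange argument. This yields a matching of size $\mu$ with rows $i_1<\dots<i_\mu$ and columns $j_1<\dots<j_\mu$, and I expect we can arrange $i_k\le j_k$ for every $k$. Indeed, any matching can be uncrossed. If $i<i'$ are matched to $j>j'$ respectively, then $i\le i'\le j'<j$, so both swapped pairs $(i,j')$ and $(i',j)$ are still edges. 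The sorted pairing $i_k\leftrightarrow j_k$ is thus a valid matching.

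It then suffices to prove that for $0\le i_1<\dots<i_m$ and $j_1<\dots<j_m$ with $i_k\le j_k$, the minor $\det\bigl(\binom{j_b}{i_a}\bigr)$ is nonzero. This is the main step. I would use the Lindström--Gessel--Viennot lemma: $\binom{j}{i}$ counts lattice paths in the planar grid from a source $A_i=(-i,i)$ (or an analogous placement) to a sink $B_j=(0,j)$ using unit steps. The points should be arranged so that sources and sinks are compatible, meaning only the identity permutation admits vertex-disjoint path systems. The determinant then equals the number of nonintersecting path families, which is nonnegative. It remains to exhibit one such family when $i_k\le j_k$, and this gives positivity. This is the classical total nonnegativity of the Pascal matrix. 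A concrete choice is the Pascal triangle grid, where $\binom{j}{i}$ counts paths from $(0,0)$ shifted appropriately; equivalently, one can cite that the lower-triangular Pascal matrix is totally nonnegative with a minor positive exactly when $i_k\le j_k$ (Gantmacher--Krein/Karlin).

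The hardest part is writing down the explicit nonintersecting family. I would try paths that go straight ``down'' first and then across. Taking the paths in order $k=1,\dots,m$, strictly increasing $i_k$ and $j_k$ keeps consecutive paths separated, with the $k$th path staying strictly on one side of the $(k+1)$st. As a fallback, I would induct on $m$ using the column operations coming from $\binom{j}{i}=\binom{j-1}{i}+\binom{j-1}{i-1}$.
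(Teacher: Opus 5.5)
Your proposal is correct and follows the paper's proof. It uses the same Leibniz-expansion upper bound and the same Lindstr\"om--Gessel--Viennot argument with sources $(-i,i)$ and sinks $(0,j)$; your uncrossing swap only replaces the paper's Hall-condition check that the sorted pairing with $i_a\le j_a$ is a matching. For the explicit nonintersecting family, take the $j_a-i_a$ north steps first and then the $i_a$ east steps, as the paper does, since east-first paths can collide on the axis $x=0$.
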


\begin{proof}
If $\mu(I,J)=0$, the graph has no edges, so the submatrix is zero and the
claim is immediate. We may therefore assume $\mu(I,J)\ge1$.
Every nonzero entry of $\mathcal P_L[I,J]$ corresponds to an edge $i\le j$.
Thus, for any $\rho\ge1$, a nonzero $\rho\times\rho$ minor can exist only if
the corresponding bipartite graph has a matching of size $\rho$.

Conversely, suppose a matching of size $\rho$ exists. Let its selected row
and column endpoint sets be $I'\subseteq I$ and $J'\subseteq J$,
respectively, and list them increasingly as
\[
i_1<\cdots<i_\rho,
\qquad
j_1<\cdots<j_\rho.
\]
The selected graph has a perfect matching if and only if
\[
i_a\le j_a,
\qquad 1\le a\le\rho.
\]
Indeed, these inequalities give the matching $i_a\mapsto j_a$. Conversely,
if $i_a>j_a$ for some $a$, then the first $a$ columns can be adjacent only to
the first $a-1$ rows, contradicting Hall's condition.

We now prove that the corresponding minor
\[
\det\left[\binom{j_b}{i_a}\right]_{a,b=1}^\rho
\]
is strictly positive. Consider the directed square lattice with north and east
edges, sources
\[
\mathsf{s}_i=(-i,i),
\]
and sinks
\[
\mathsf{t}_j=(0,j).
\]
A directed path from $\mathsf{s}_i$ to $\mathsf{t}_j$ has $i$ east steps and
$j-i$ north
steps, so there are exactly $\binom ji$ such paths when $i\le j$, and none
otherwise. Thus the displayed matrix is the path matrix for these sources and
sinks.

All relevant paths lie in the triangular planar region
\[
\{(x,y):-L\le x\le0,\ -x\le y\le L\}.
\]
Restricting to the integer vertices of this region and to the north and east
edges that remain inside it gives a finite directed network containing every
relevant path. It is acyclic because each directed edge increases $x+y$ by
one.
Encode any pairing by a permutation $\pi$. If $\pi$ is not order preserving,
it has an inversion $a<b$ with $\pi(a)>\pi(b)$. The endpoints of those two
source--sink pairs alternate on the boundary of the planar region, so the two
paths must intersect. Thus a
vertex-disjoint family must preserve order and hence must connect
$\mathsf{s}_{i_a}$ to $\mathsf{t}_{j_a}$ for every $a$. Such a family exists:
for each $a$, take $j_a-i_a$ north steps followed by $i_a$ east steps. These
paths are pairwise vertex-disjoint. For $a<b$, the vertical part of the $b$th
path lies strictly to the left of the horizontal part of the $a$th path,
while its horizontal part lies strictly above the entire $a$th path. The
Lindstr\"om--Gessel--Viennot lemma
\cite{Lindstrom1973,GesselViennot1985} therefore
identifies the determinant with the positive number of vertex-disjoint
families using the order-preserving pairing. Hence the selected submatrix has
rank at least $\rho$. Maximizing over $\rho$ proves the claim.
\end{proof}

\begin{lemma}[Two-shift Pascal uncertainty]\label{lem:one-dimensional}
Let $q\in\mathbb R[y]$ be nonzero of degree $L$. Then
\begin{equation}\label{eq:one-dimensional}
\wt(q(y))+\wt(q(y+1))\ge L+2.
\end{equation}
\end{lemma}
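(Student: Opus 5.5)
We will deduce the estimate from Lemma~\ref{lem:pascal-rank} by a rank--nullity count. Write $q(y)=\sum_{j=0}^L a_jy^j$, put $J:=\supp q\subseteq\{0,\ldots,L\}$, and write $q(y+1)=\sum_i b_iy^i$. By the binomial theorem, $b_i=\sum_{j\in J}\binom ji a_j$, so the coefficient vector $b$ is $\mathcal P_L[\{0,\ldots,L\},J]\,a_J$. Let $I:=\{0,\ldots,L\}\setminus\supp q(y+1)$ be the set of vanishing coefficients of the shifted polynomial. Then $a_J$ is a vector with all entries nonzero (in particular nonzero) lying in the kernel of the submatrix $\mathcal P_L[I,J]$. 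Hence $\operatorname{rank}\mathcal P_L[I,J]\le |J|-1$, and by Lemma~\ref{lem:pascal-rank} the maximum matching $\mu(I,J)$ of the bipartite graph ($i\sim j$ iff $i\le j$) is at most $|J|-1$.

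Next, bound $\mu(I,J)$ from below by a greedy/Hall argument. Since $L=\deg q\in J$, and $L$ is adjacent to every $i\in I$, the graph is nontrivial whenever $I\ne\varnothing$. By K\H{o}nig's theorem, $\mu(I,J)$ equals the minimum vertex cover. A vertex cover $C_I\cup C_J$ must cover every edge $i\le j$. Let $i_0$ be the largest element of $I\setminus C_I$. All $j\in J$ with $j\ge i_0$ must then lie in $C_J$, and all $i\in I$ with $i>i_0$ lie in $C_I$. So the cover size is at least $|\{i\in I:i>i_0\}|+|\{j\in J:j\ge i_0\}|$. The goal is to show this forces
\[
\mu(I,J)\ge \min\bigl(|I|,\;|J|\bigr)\quad\text{or a variant,}
\]
which combined with $\mu\le|J|-1$ yields the count. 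The cleanest target is instead the inequality $|I|+|J|\le L+|J|-\ldots$. Concretely, we want
\[
|I|\le |J|-1+\#\{\text{positions not in }I\text{ or forced}\},
\]
that is, $|I|\le L+1-(L+2-|J|)=|J|-1$. This last inequality is equivalent to $\wt(q(y+1))=L+1-|I|\ge L+2-|J|$, which is exactly \eqref{eq:one-dimensional}.

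So the plan reduces to proving $|I|\le\mu(I,J)$ whenever $I\subseteq\{0,\ldots,L\}$, using only that $L\in J$. That cannot hold in general, since a matching saturating $I$ needs Hall's condition: for each $t$, $|\{i\in I:i\ge t\}|\le|\{j\in J:j\ge t\}|$. The main obstacle is therefore the case where Hall fails. We would handle it by a symmetric argument. The shift $y\mapsto y-1$ expresses $q$ in terms of $q(y+1)$ with a signed Pascal matrix, whose submatrices have ranks equal to the same matching numbers (conjugate by $\mathrm{diag}((-1)^i)$). If Hall fails at a threshold $t$, truncate: the coefficients of $q(y+1)$ of degree $\ge t$ depend only on the $a_j$ with $j\ge t$. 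This gives a polynomial of lower degree, with a top-degree part to which we apply induction on $L$. Proving the inequality by induction on $L$, with this truncation at the violating threshold $t$ combined with the complementary count of degrees below $t$, is where we expect the real work to lie. The base cases $L=0$, where both weights are $1$, and the case in which the matching saturates $I$ follow directly from the kernel-rank argument above.
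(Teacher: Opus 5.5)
Your setup is right, but the proposal does not prove the lemma. After obtaining $\mu(I,J)\le|J|-1$ you correctly note that this does not force $|I|\le|J|-1$, and what follows is only a plan. For example, $I=\{5,6,7\}$, $J=\{0,7\}$, $L=7$ gives $\mu(I,J)=1=|J|-1$ while $|I|=3$. The same example shows that the intermediate target $\mu(I,J)\ge\min(|I|,|J|)$ is false.

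The proposed induction also does not go through as described. Truncating to the $a_j$ with $j\ge t$ leaves a polynomial that still has degree $L$, since $a_L\ne0$. Meanwhile the coefficients $b_i$ with $i<t$ depend on the $a_j$ with $j\ge t$ as well. So neither piece is a smaller instance of the lemma, and the K\H{o}nig vertex-cover paragraph does not close the gap either.

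The missing idea is that passing from ``$a_J$ is a nonzero kernel vector'' to ``$\operatorname{rank}\mathcal P_L[I,J]\le|J|-1$'' discards the information you need. Use instead that the specific coordinate $a_L$ is nonzero.

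\begin{itemize}
\item The relation $\mathcal P_L[I,J]a_J=0$ with $a_L\ne0$ writes the column indexed by $L$ as a linear combination of the other columns.
\item Hence $\operatorname{rank}\mathcal P_L[I,J]=\operatorname{rank}\mathcal P_L[I,J\setminus\{L\}]$, and Lemma~\ref{lem:pascal-rank} gives $\mu(I,J)=\mu(I,J\setminus\{L\})$.
\item Now suppose $|I|\ge|J|$. Any maximum matching of $(I,J\setminus\{L\})$ has at most $|J|-1<|I|$ edges, so it misses some row $i\in I$.
\item Since $i\le L$, column $L$ is adjacent to that row and can be matched to it. This gives $\mu(I,J)=\mu(I,J\setminus\{L\})+1$, a contradiction.
\end{itemize}

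Therefore $|I|\le|J|-1$, which is exactly $\wt(q(y+1))\ge L+2-\wt(q)$, and no induction or truncation is needed. This is the paper's argument. The paper selects exactly $|J|$ vanishing rows, but the same reasoning works with all of $I$.
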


\begin{proof}
Write
\[
q(y)=\sum_{j=0}^L a_jy^j,
\qquad a_L\ne0,
\]
and
\[
q(y+1)=\sum_{i=0}^L b_i y^i.
\]
Then
\[
b_i=\sum_{j=i}^L\binom{j}{i}a_j,
\]
so, for the column vectors
\[
a:=(a_0,\ldots,a_L)^{\mathsf T},
\qquad
b:=(b_0,\ldots,b_L)^{\mathsf T},
\]
we have $b=\mathcal P_La$. Let
\[
J=\{j:a_j\ne0\},
\qquad |J|=\sigma.
\]
Assume for contradiction that $b$ has at least $\sigma$ zero
coordinates. Choose a set $I$ of exactly $\sigma$ such coordinates. Then
with $a_J:=(a_j)_{j\in J}$,
\begin{equation}\label{eq:pascal-relation}
\mathcal P_L[I,J]a_J=0.
\end{equation}
Because $L\in J$, compare the bipartite graphs for $(I,J)$ and
$(I,J\setminus\{L\})$. A maximum matching for the latter graph leaves at
least one row vertex unmatched, because there are $\sigma$ rows and only
$\sigma-1$ columns. The column $L$ is adjacent to every row vertex, so it can be matched
to such an unmatched row; hence adjoining it raises the matching number by at
least one. Adding one column can raise a matching number by at most one, so the
increase is exactly one. By Lemma~\ref{lem:pascal-rank},
\[
\operatorname{rank}\mathcal P_L[I,J]
=
\operatorname{rank}\mathcal P_L[I,J\setminus\{L\}]+1.
\]
Thus the column indexed by $L$ is not in the span of the remaining columns.
But \eqref{eq:pascal-relation} and $a_L\ne0$ express precisely that column as
a linear combination of the remaining columns, a contradiction.

Therefore $b$ has at most $\sigma-1$ zero coordinates. Hence
\[
\wt(q(y+1))\ge (L+1)-(\sigma-1)=L-\sigma+2.
\]
Adding $\wt(q)=\sigma$ gives \eqref{eq:one-dimensional}.
\end{proof}

\section{The tensorized Pascal uncertainty principle}\label{sec:tensor-pascal}

An affine chart on a coordinate facet is obtained by dehomogenizing one
variable. Translation invariance will produce the partially shifted charts
below. The next lemma is the uncertainty principle that prevents their
monomial supports from all being sparse at once.

For $\varepsilon=(\varepsilon_1,\ldots,\varepsilon_d)\in\{0,1\}^d$, define
\[
(\tau_\varepsilon p)(x_1,\ldots,x_d)
:=p(x_1+\varepsilon_1,\ldots,x_d+\varepsilon_d).
\]
The induction below separates the last variable from the first $d-1$
variables. The coefficient of a suitable power of the last variable provides
the polynomial to which the $(d-1)$-variable induction hypothesis is applied.

\begin{lemma}[Tensorized Pascal uncertainty principle]\label{lem:tensor}
Let $d\ge1$, let $p\in\mathbb R[x_1,\ldots,x_d]$, and let
$\kappa\in\mathbb Z_{\ge0}^d$. If
\[
[x_1^{\kappa_1}\cdots x_d^{\kappa_d}]p\ne0,
\]
then
\begin{equation}\label{eq:tensor}
\sum_{\varepsilon\in\{0,1\}^d}\wt(\tau_\varepsilon p)
\ge
\prod_{i=1}^d(\kappa_i+2).
\end{equation}
\end{lemma}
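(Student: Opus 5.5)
The plan is to induct on $d$. The base case comes from Lemma~\ref{lem:one-dimensional}, and the inductive step splits off the last variable and applies that lemma fiberwise, one monomial of the first $d-1$ variables at a time.

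\emph{Base case $d=1$.} If $[x_1^{\kappa_1}]p\ne0$, then $p\ne0$ and $\deg p\ge\kappa_1$. Lemma~\ref{lem:one-dimensional} then gives
\[
\wt(p)+\wt(\tau_1p)\ge\deg p+2\ge\kappa_1+2.
\]

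\emph{Inductive step.} Write $x=(x',x_d)$ and $\kappa=(\kappa',\kappa_d)$, and expand
\[
p=\sum_{j}c_j(x')\,x_d^{\,j}.
\]
Write $\varepsilon=(\varepsilon',\varepsilon_d)$. For each $\varepsilon'\in\{0,1\}^{d-1}$ and each monomial $x'^\mu$, define the one-variable polynomial
\[
q^{\varepsilon'}_\mu(y):=\sum_j\bigl([x'^\mu]\tau_{\varepsilon'}c_j\bigr)\,y^j .
\]
Since the shift in $x_d$ acts only on the $x_d$-factor, we have
\[
\tau_\varepsilon p=\sum_\mu x'^\mu\,q^{\varepsilon'}_\mu(x_d+\varepsilon_d).
\]
Distinct $\mu$ give disjoint sets of monomials, so
\[
\wt(\tau_\varepsilon p)=\sum_\mu \wt\bigl(q^{\varepsilon'}_\mu(y+\varepsilon_d)\bigr).
\]

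\emph{Summing over $\varepsilon_d$.} For fixed $\varepsilon'$, add the two values $\varepsilon_d\in\{0,1\}$ and apply Lemma~\ref{lem:one-dimensional} to each nonzero $q^{\varepsilon'}_\mu$. This gives
\[
\sum_{\varepsilon_d\in\{0,1\}}\wt(\tau_\varepsilon p)\ge\sum_{\mu:\,q^{\varepsilon'}_\mu\ne0}\bigl(\deg q^{\varepsilon'}_\mu+2\bigr).
\]
Now restrict attention to those $\mu$ with $[x'^\mu]\tau_{\varepsilon'}c_{\kappa_d}\ne0$. For such $\mu$, the polynomial $q^{\varepsilon'}_\mu$ has a nonzero $y^{\kappa_d}$ coefficient, so it is nonzero and has degree at least $\kappa_d$. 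Dropping all other terms, which are nonnegative, yields
\[
\sum_{\varepsilon_d}\wt(\tau_\varepsilon p)\ge(\kappa_d+2)\,\wt(\tau_{\varepsilon'}c_{\kappa_d}).
\]

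\emph{Closing the induction.} The polynomial $c_{\kappa_d}(x')$ is the "suitable coefficient" of the last variable. It satisfies $[x'^{\kappa'}]c_{\kappa_d}=[x^\kappa]p\ne0$. Summing the previous inequality over $\varepsilon'$ and applying the induction hypothesis in $d-1$ variables to $c_{\kappa_d}$ gives
\[
\sum_{\varepsilon}\wt(\tau_\varepsilon p)\ge(\kappa_d+2)\prod_{i<d}(\kappa_i+2),
\]
which is \eqref{eq:tensor}.

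The main point to verify is the fiber decomposition: translating the first $d-1$ variables never mixes different powers of $x_d$, and translating $x_d$ never mixes different $x'$-monomials. This is what makes $\wt(\tau_\varepsilon p)$ split exactly as a sum over $\mu$. Once that is in place, the rest is routine bookkeeping.
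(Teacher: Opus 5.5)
Your proposal is correct and follows essentially the same route as the paper. Both argue by induction on $d$: your fiber polynomials $q^{\varepsilon'}_\mu$ are exactly the paper's $\varphi_{\varepsilon',\omega}$, Lemma~\ref{lem:one-dimensional} is applied fiberwise over the support of $\tau_{\varepsilon'}c_{\kappa_d}$, and the induction hypothesis is applied to the coefficient polynomial $c_{\kappa_d}=[x_d^{\kappa_d}]p$, with the fiber decomposition justified in the same way as in the paper.
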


\begin{proof}
We argue by induction on $d$. If $d=1$, rename the only variable $y$. Then
Lemma~\ref{lem:one-dimensional} gives
\[
\wt(p(y))+\wt(p(y+1))\ge \deg p+2\ge\kappa_1+2.
\]
The last inequality holds because $[y^{\kappa_1}]p\ne0$.

Now let $d\ge2$ and assume the assertion for $d-1$. Set
\[
x'=(x_1,\ldots,x_{d-1}),
\qquad
\kappa'=(\kappa_1,\ldots,\kappa_{d-1}),
\qquad
y=x_d.
\]
View $p$ as a polynomial in $y$, with coefficients that are themselves
polynomials in $x'$:
\[
p(x',y)=\sum_{j\ge0}p_j(x')y^j,
\qquad
p_j(x'):=[y^j]p(x',y).
\]
Thus
$p_{\kappa_d}(x')=[y^{\kappa_d}]p(x',y)$ is the coefficient polynomial of
$y^{\kappa_d}$. The coefficient assumption in the statement becomes
\begin{equation}\label{eq:pd-coeff}
[x'^{\kappa'}]p_{\kappa_d}
=[x_1^{\kappa_1}\cdots x_{d-1}^{\kappa_{d-1}}y^{\kappa_d}]p
\ne0.
\end{equation}
In particular, $p_{\kappa_d}$ is a nonzero polynomial in $d-1$ variables.

Fix $\varepsilon'\in\{0,1\}^{d-1}$. After shifting only the $x'$ variables,
expand the result in the monomial basis in $x'$:
\[
p(x'+\varepsilon',y)
=
\sum_{\omega\in\mathbb Z_{\ge0}^{d-1}}
x'^\omega\varphi_{\varepsilon',\omega}(y),
\]
where the sum is finite and
\[
\varphi_{\varepsilon',\omega}(y)
:=[x'^\omega]p(x'+\varepsilon',y).
\]
Coefficient extraction in $y$ commutes with the shift in $x'$, so
\[
[y^{\kappa_d}]\varphi_{\varepsilon',\omega}(y)
=
[x'^\omega]p_{\kappa_d}(x'+\varepsilon').
\]
\begin{samepage}
Consequently, whenever
\[
\omega\in\supp\bigl(p_{\kappa_d}(x'+\varepsilon')\bigr),
\]
the fiber polynomial $\varphi_{\varepsilon',\omega}$ is nonzero and has
degree at least $\kappa_d$.
\end{samepage}
For $\delta\in\{0,1\}$, shifting the last variable as well gives
\[
(\tau_{(\varepsilon',\delta)}p)(x',y)
=p(x'+\varepsilon',y+\delta)
=\sum_\omega x'^\omega
\varphi_{\varepsilon',\omega}(y+\delta).
\]
Distinct $\omega$ give disjoint $x'$-monomial fibers. Therefore their
monomial counts add:
\[
\wt\bigl(\tau_{(\varepsilon',\delta)}p\bigr)
=
\sum_\omega
\wt\bigl(\varphi_{\varepsilon',\omega}(y+\delta)\bigr).
\]
For each $\omega\in\supp(p_{\kappa_d}(x'+\varepsilon'))$,
Lemma~\ref{lem:one-dimensional} gives
\[
\wt\bigl(\varphi_{\varepsilon',\omega}(y)\bigr)
+\wt\bigl(\varphi_{\varepsilon',\omega}(y+1)\bigr)
\ge \deg\varphi_{\varepsilon',\omega}+2
\ge \kappa_d+2.
\]
Summing these inequalities and discarding the nonnegative contributions from
all other $\omega$ yields
\begin{align*}
&\wt\bigl(\tau_{(\varepsilon',0)}p\bigr)
+
\wt\bigl(\tau_{(\varepsilon',1)}p\bigr)\\
&\hspace{2cm}\ge
(\kappa_d+2)\wt\bigl(p_{\kappa_d}(x'+\varepsilon')\bigr).
\end{align*}
Finally, sum over all $\varepsilon'$ and apply the induction hypothesis to the
coefficient polynomial $p_{\kappa_d}$, using \eqref{eq:pd-coeff}:
\begin{align*}
\sum_{\varepsilon\in\{0,1\}^d}\wt(\tau_\varepsilon p)
&\ge
(\kappa_d+2)
\sum_{\varepsilon'\in\{0,1\}^{d-1}}
\wt\bigl(p_{\kappa_d}(x'+\varepsilon')\bigr)\\
&\ge
(\kappa_d+2)\prod_{i=1}^{d-1}(\kappa_i+2)
=\prod_{i=1}^{d}(\kappa_i+2).
\end{align*}
This is \eqref{eq:tensor}.
\end{proof}

\section{A two-facet boundary lower bound}\label{sec:boundary}

Let $n\ge3$ and let $r\in\mathbb Z_{\ge0}$. Let
$H\in\mathbb R[z_1,\ldots,z_n]$ be homogeneous of degree $nr$, with
\begin{equation}\label{eq:H-assumptions}
\mathcal D H=0,
\qquad
[z_1^r\cdots z_n^r]H\ne0.
\end{equation}
Define its monomial boundary weight (that is, its monomial boundary size) by
\[
\bwt(H):=
\#\{\alpha\in\supp H:\min_i\alpha_i=0\}.
\]

\begin{samepage}
\begin{proposition}[Boundary estimate]\label{prop:boundary}
Put
\[
d=n-2,
\qquad
\ell_d=\left\lfloor\frac d2\right\rfloor.
\]
For every pair of distinct indices $i,j\in[n]$, every $H$ satisfying
\eqref{eq:H-assumptions} obeys
\begin{equation}\label{eq:boundary-estimate}
\#\{\alpha\in\supp H:\alpha_i=0\text{ or }\alpha_j=0\}
\ge
\frac{(r+2)^d}{2^{d+1}(nr+1)^{\ell_d}}.
\end{equation}
In particular, the same lower bound holds for $\bwt(H)$.
\end{proposition}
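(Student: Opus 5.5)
By symmetry of $\mathcal D$ and of the balanced monomial under permutations of the variables, I will assume $\{i,j\}=\{n-1,n\}$. The plan follows the overview in four steps.

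\emph{Facet reconstruction.} Since $\mathcal DH=0$, the derivative $\frac{d}{d\lambda}H(z+\lambda\ones{n})=(\mathcal DH)(z+\lambda\ones{n})$ vanishes, so $H$ is invariant under diagonal translation. Taking $\lambda=-z_n$ gives
\[
H(z)=B(z_1-z_n,\ldots,z_{n-1}-z_n),\qquad B(u):=H(u,0).
\]
Here $B$ is homogeneous of degree $nr$; write $B=\sum b_\gamma u^\gamma$. To extract the coefficient of $z_1^r\cdots z_n^r$, I expand each factor $(z_i-z_n)^{\gamma_i}$ and pick $z_i^r$ from it. This forces $\gamma_i\ge r$ and leaves $(-z_n)^{\gamma_i-r}$. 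The total power of $z_n$ is then $|\gamma|-(n-1)r=r$ automatically, so
\[
[z_1^r\cdots z_n^r]H=(-1)^r\sum_{\gamma_i\ge r}b_\gamma\prod_{i<n}\binom{\gamma_i}{r}.
\]
Because this is nonzero, some $b_\gamma\neq0$ with $\gamma_i\ge r$ for all $i<n$.

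\emph{Dehomogenize and apply Lemma~\ref{lem:tensor}.} Set $d=n-2$ and $p(x):=B(x_1,\ldots,x_d,1)$. Homogeneity of $B$ makes $\gamma\mapsto(\gamma_1,\ldots,\gamma_d)$ injective on $\supp B$, so no monomials merge and $[x^{\kappa}]p=b_\gamma\ne0$ with $\kappa=(\gamma_1,\ldots,\gamma_d)$. Lemma~\ref{lem:tensor} then gives
\[
\sum_S W_S\ge\prod_i(\gamma_i+2)\ge(r+2)^d,\qquad W_S:=\wt(p(x+\indvec S)).
\]

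\emph{Identify the endpoints with the two facets.}
\begin{itemize}
\item By injectivity, $W_\varnothing=\wt(B)$, and $\wt(B)$ is the number of $\alpha\in\supp H$ with $\alpha_n=0$.
\item For the other endpoint, substitute $z_{n-1}=0$, $z_n=-1$ in the reconstruction formula. This gives $H(z_1,\ldots,z_{n-2},0,-1)=B(z_1+1,\ldots,z_{n-2}+1,1)=p(z'+\ones{d})$.
\item The polynomial $H|_{z_{n-1}=0}$ is homogeneous in $(z',z_n)$, so setting $z_n=-1$ changes coefficients only by signs and merges nothing.
\item Hence $W_{[d]}$ equals the number of $\alpha\in\supp H$ with $\alpha_{n-1}=0$.
\end{itemize}

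\emph{Compression (the main obstacle).} The task is to bound the intermediate weights $W_S$ by the two endpoints. Every monomial of $p$ has degree at most $nr$ in each variable. Shifting one variable by $\pm1$ therefore sends a monomial to at most $nr+1$ monomials, and the degree bound persists after the shift. Consequently, changing $k$ coordinates multiplies the support by at most $(nr+1)^k$. Reaching $S$ from $\varnothing$ shifts $|S|$ coordinates by $+1$; reaching it from $[d]$ shifts the $d-|S|$ coordinates of the complement by $-1$. Since $\min(|S|,d-|S|)\le\ell_d$,
\[
W_S\le(nr+1)^{\ell_d}(W_\varnothing+W_{[d]}).
\]
Summing over the $2^d$ subsets yields $W_\varnothing+W_{[d]}\ge(r+2)^d/(2^d(nr+1)^{\ell_d})$. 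The union of the two facet supports has size at least $\max(W_\varnothing,W_{[d]})\ge\tfrac12(W_\varnothing+W_{[d]})$, which is \eqref{eq:boundary-estimate}. Every such $\alpha$ has a zero exponent, so the same bound holds for $\bwt(H)$.

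The delicate points are checking that dehomogenization never merges monomials at either endpoint, and that the per-variable degree stays at most $nr$ so the factor $(nr+1)$ is legitimate.
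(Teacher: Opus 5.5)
Your proposal is correct and follows the paper's proof essentially step for step. The shared steps are facet reconstruction, extraction of a deep monomial from the balanced-coefficient expansion, dehomogenization followed by the tensorized Pascal lemma, compression of each intermediate shift to the nearer endpoint at cost $(nr+1)^{\ell_d}$, and identification of $W_\varnothing$ and $W_{[d]}$ with the supports on $z_n=0$ and $z_{n-1}=0$. The one cosmetic difference is that you bound the shifted supports one variable at a time rather than through the product $\prod_{i\in S}(\eta_i+1)$, which is equivalent.
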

\end{samepage}

\begin{proof}
\smallskip
\noindent\emph{Facet reconstruction.}
Since $\mathcal D H=0$,
\[
\frac{\mathrm d}{\mathrm d\lambda}H(z+\lambda\ones{n})
=(\mathcal D H)(z+\lambda\ones{n})=0,
\]
so, for every $\lambda$,
\begin{equation}\label{eq:translation-invariance}
H(z+\lambda\ones{n})=H(z).
\end{equation}
Define the $n$th facet polynomial
\[
B(u_1,\ldots,u_{n-1})
:=H(u_1,\ldots,u_{n-1},0).
\]
For each fixed $z$, identity \eqref{eq:translation-invariance} holds for every
scalar $\lambda$. We may therefore choose $\lambda=-z_n$, which gives
\begin{equation}\label{eq:H-from-B}
H(z)=B(z_1-z_n,\ldots,z_{n-1}-z_n).
\end{equation}

\smallskip
\noindent\emph{A deep monomial.}
Write, with $\gamma\in\mathbb Z_{\ge0}^{n-1}$,
\[
B(u)=\sum_{|\gamma|=nr}b_\gamma u^\gamma.
\]
To extract $z_1^r\cdots z_n^r$ from the term
$b_\gamma\prod_{i<n}(z_i-z_n)^{\gamma_i}$, one must have
$\gamma_i\ge r$ for every $i<n$. After selecting $z_i^r$ from each factor,
the remaining $z_n$-exponent is
\[
\sum_{i<n}(\gamma_i-r)=nr-(n-1)r=r,
\]
and the total sign is $(-1)^r$. Therefore
\begin{equation}\label{eq:center-expansion}
[z_1^r\cdots z_n^r]H
=
(-1)^r
\sum_{\substack{|\gamma|=nr\\\gamma_i\ge r\ (i<n)}}
 b_\gamma\prod_{i=1}^{n-1}\binom{\gamma_i}{r}.
\end{equation}
Because this coefficient is nonzero, there exists $\gamma$ such that
\begin{equation}\label{eq:deep-gamma}
b_\gamma\ne0,
\qquad
\gamma_i\ge r
\quad(1\le i\le n-1).
\end{equation}

\smallskip
\noindent\emph{Tensorized Pascal uncertainty.}
The facet polynomial $B$ has $n-1$ variables. Dehomogenizing its last
variable leaves $d=n-2$ affine variables, which is the dimension in which we
apply Lemma~\ref{lem:tensor}. Set $x=(x_1,\ldots,x_d)$ and define
\[
p(x)
:=B(x_1,\ldots,x_d,1).
\]
As $B$ is homogeneous of fixed degree $nr$, the omitted exponent is recovered
from the exponents of $x_1,\ldots,x_d$. Thus distinct monomials of $B$ remain
distinct after this substitution, and
\begin{equation}\label{eq:dehom-weight}
\wt(p)=\wt(B).
\end{equation}
Write $\gamma'=(\gamma_1,\ldots,\gamma_d)$. Then
\eqref{eq:deep-gamma} gives
\[
[x^{\gamma'}]p=b_\gamma\ne0,
\qquad
\gamma_i\ge r\quad(1\le i\le d).
\]
For every $S\subseteq[d]$, define the partial-shift weight
\[
W_S:=\wt\bigl(p(x+\indvec{S})\bigr).
\]
Apply Lemma~\ref{lem:tensor} with $\kappa=\gamma'$. Since
$\gamma_i\ge r$ for $1\le i\le d$, it yields
\begin{equation}\label{eq:all-shifts-lower}
\sum_{S\subseteq[d]}W_S
\ge\prod_{i=1}^d(\gamma_i+2)
\ge (r+2)^d.
\end{equation}

\smallskip
\noindent\emph{Comparison with the two endpoint shifts.}
We next control every $W_S$ by one of the two endpoint weights
$W_\varnothing$ and $W_{[d]}$, which will subsequently be identified with two
coordinate-facet support sizes. Put
\[
\Lambda_r:=nr+1.
\]
If $|S|=k$, then every exponent of a monomial $x^\eta$ in $p$ satisfies
$\eta_i\le nr$. Such a monomial produces at most
\[
\prod_{i\in S}(\eta_i+1)\le(nr+1)^k=\Lambda_r^k
\]
monomials after translation in the variables belonging to $S$. Cancellations
can only decrease the resulting support. Summing this bound over the monomials
of $p$ gives
\begin{equation}\label{eq:forward-bound}
W_S\le \Lambda_r^kW_\varnothing.
\end{equation}
On the other hand, if $\widetilde p(x):=p(x+\ones{d})$ and
$S^c=[d]\setminus S$, then
\[
p(x+\indvec{S})=\widetilde p(x-\indvec{S^c}).
\]
Translation preserves the total-degree bound, so every exponent of every
monomial of $\widetilde p$ is also at most $nr$. The identical
monomial-expansion argument for translation by $-1$ therefore gives
\begin{equation}\label{eq:backward-bound}
W_S\le \Lambda_r^{d-k}W_{[d]}.
\end{equation}
In the Hamming cube $2^{[d]}$, the distances from $S$ to the two endpoints
$\varnothing$ and $[d]$ are $k=|S|$ and $d-k$, respectively. At least one of
these distances is at most $\ell_d$. Thus use \eqref{eq:forward-bound} when
$k\le\ell_d$ and \eqref{eq:backward-bound} when $d-k\le\ell_d$. In either
case,
\[
W_S
\le \Lambda_r^{\ell_d}(W_\varnothing+W_{[d]}).
\]
Summing over all $2^d$ subsets $S$ and using \eqref{eq:all-shifts-lower},
\begin{equation}\label{eq:endpoint-sum}
W_\varnothing+W_{[d]}
\ge
\frac{(r+2)^d}{2^d(nr+1)^{\ell_d}}.
\end{equation}

\smallskip
\noindent\emph{Identification of the endpoint shifts with facets.}
For $j\in\{n-1,n\}$, define the facet support in the common ambient exponent
set by
\[
E_j:=\{\alpha\in\supp H:\alpha_j=0\}.
\]
We now identify the endpoint weights with the cardinalities of these two
sets. By \eqref{eq:dehom-weight},
\[
|E_n|=W_\varnothing.
\]
For the facet $z_{n-1}=0$, define
\[
B_{n-1}(v_1,\ldots,v_{n-2},v_n)
:=H(v_1,\ldots,v_{n-2},0,v_n).
\]
\begin{samepage}
Using translation invariance,
\[
B_{n-1}(v_1,\ldots,v_{n-2},v_n)
=
B(v_1-v_n,\ldots,v_{n-2}-v_n,-v_n).
\]
\end{samepage}
Setting $v_n=-1$ gives
\[
B_{n-1}(x_1,\ldots,x_d,-1)=p(x+\ones{d}).
\]
More explicitly, for $\alpha\in E_{n-1}$, this substitution sends the
coefficient $[z^\alpha]H$ to
$(-1)^{\alpha_n}[z^\alpha]H$ multiplying
$x_1^{\alpha_1}\cdots x_d^{\alpha_d}$. The exponent map
\[
\alpha\longmapsto(\alpha_1,\ldots,\alpha_d)
\]
is injective on $E_{n-1}$ because homogeneity determines the omitted exponent
by $\alpha_n=nr-\sum_{i=1}^d\alpha_i$. Thus the substitution gives a
monomial bijection and only multiplies coefficients by nonzero signs; in
particular, it causes neither merging nor cancellation. Hence
\[
|E_{n-1}|=W_{[d]}.
\]

The union $E_n\cup E_{n-1}$ lies in the monomial boundary of $H$, and
\[
|E_n\cup E_{n-1}|
\ge\max(W_\varnothing,W_{[d]})
\ge\frac{W_\varnothing+W_{[d]}}{2}.
\]
Thus \eqref{eq:boundary-estimate} follows for the pair
$\{n-1,n\}$ from \eqref{eq:endpoint-sum}. Permuting the variables preserves
\eqref{eq:H-assumptions} and gives the same estimate for every distinct pair
$i,j$. Since each such two-facet union is contained in the full monomial
boundary, the asserted estimate for $\bwt(H)$ follows as well.
\end{proof}

\section{Disjoint shells and proof of the main theorem}\label{sec:shells}

\begin{proof}[Proof of the lower-bound assertions in
Theorem~\ref{thm:main}]
The case $n=2$ is direct. Write
\[
\zeta_k:=g(k,2R-k),
\qquad 0\le k\le2R.
\]
Equation \eqref{eq:simplex-relation} says
\[
\zeta_{k+1}+\zeta_k=0,
\qquad 0\le k<2R.
\]
Since $\zeta_R\ne0$, every $\zeta_k$ is nonzero, so $|\supp g|=2R+1$.

Now assume $n\ge3$ and let $F$ be the normalized polynomial from
\eqref{eq:F-def}. For $0\le t\le R$, define
\begin{equation}\label{eq:Ft}
F_t:=\partial_{z_1}^t\cdots\partial_{z_n}^tF.
\end{equation}
Put $r=R-t$; this is the residual scale of $F_t$. The polynomial $F_t$ is
homogeneous of degree $nr$. Since $\mathcal D$ commutes with each partial
derivative,
\[
\mathcal D F_t
=\partial_{z_1}^t\cdots\partial_{z_n}^t(\mathcal D F)=0.
\]
Its balanced coefficient is
\begin{equation}\label{eq:Ft-center}
[z_1^r\cdots z_n^r]F_t
=
\left(\frac{R!}{r!}\right)^n
[z_1^R\cdots z_n^R]F
\ne0.
\end{equation}
For $t=R$, so that $r=0$, this also shows that $F_R$ is a nonzero constant.
The exponent subtraction induced by differentiation is injective, so the
displayed balanced contribution is the only one that can produce
$z_1^r\cdots z_n^r$.

A supported term $c_\alpha z^\alpha$ of $F$ survives in $F_t$ exactly when
$\alpha_i\ge t$ for all $i$. For such an exponent, define
\[
\alpha^{\langle t\rangle}
:=(\alpha_1-t,\ldots,\alpha_n-t).
\]
The term then becomes
\[
c_\alpha
\left(\prod_{i=1}^n\frac{\alpha_i!}{(\alpha_i-t)!}\right)
z^{\alpha^{\langle t\rangle}}.
\]
The multiplier is nonzero, and the exponent map
$\alpha\mapsto\alpha^{\langle t\rangle}$ is injective, so distinct monomials cannot cancel
after differentiation. The image lies on the monomial boundary precisely when
$\min_i\alpha_i=t$. Consequently, the boundary support of $F_t$ is in
bijection with the $t$th shell of $F$:
\begin{equation}\label{eq:shell-bijection}
\bwt(F_t)
=
\#\{\alpha\in\supp F:\min_i\alpha_i=t\}.
\end{equation}
The shells are disjoint. Since every $\alpha\in\Delta_{nR}^{(n)}$ satisfies
$\min_i\alpha_i\le R$, they exhaust $\supp F$. Hence
\begin{equation}\label{eq:shell-sum}
\wt(F)=\sum_{t=0}^R\bwt(F_t).
\end{equation}

Let
\[
d=n-2,
\qquad
\ell_d=\left\lfloor\frac d2\right\rfloor,
\qquad
\theta=d-\ell_d=\left\lceil\frac d2\right\rceil.
\]
For $r\ge1$, Proposition~\ref{prop:boundary} and the inequalities
\[
r+2\ge r,
\qquad
nr+1\le(n+1)r
\]
give
\[
\bwt(F_t)
\ge
\frac{r^\theta}{2^{n-1}(n+1)^{\ell_d}}.
\]
In \eqref{eq:shell-sum}, set $r=R-t$ and discard the nonnegative term with
$r=0$. Then
\begin{align*}
\wt(F)
&\ge
\frac1{2^{n-1}(n+1)^{\ell_d}}
\sum_{r=1}^Rr^\theta\\
&\ge
\frac{R^{\theta+1}}
{2^{n-1}(n+1)^{\ell_d}(\theta+1)}.
\end{align*}
Here the last step uses
\[
\sum_{r=1}^Rr^\theta
\ge\int_0^R\xi^\theta\,\mathrm d\xi
=\frac{R^{\theta+1}}{\theta+1}.
\]
Since
\[
\theta+1=\left\lceil\frac n2\right\rceil,
\]
this is exactly \eqref{eq:general-lower-bound}. Lemma~\ref{lem:normalization}
then converts the estimate back to $|\supp g|$.
\end{proof}

\section{Optimality of the exponent}\label{sec:sharpness}

For positive quantities depending on $R$, the notation $A_R\asymp_n B_R$
means that $a_nB_R\le A_R\le b_nB_R$ for constants $a_n,b_n>0$ depending
only on $n$.

\begin{proposition}[Examples attaining the optimal exponent]\label{prop:sharp}
For all integers $n\ge2$ and $R\ge1$, there exists a homogeneous polynomial
$\Phi_{n,R}\in\ker\mathcal D$ of degree $nR$ such that
\[
[z_1^R\cdots z_n^R]\Phi_{n,R}\ne0
\]
and
\[
\wt(\Phi_{n,R})\asymp_n R^{\lceil n/2\rceil}.
\]
Hence the exponent in Theorem~\ref{thm:main} is optimal.
\end{proposition}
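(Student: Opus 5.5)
The plan is to build $\Phi_{n,R}$ as a product of polynomials in disjoint blocks of variables, each depending only on coordinate differences. Any polynomial in the differences $z_i-z_j$ is invariant under $z\mapsto z+\lambda\ones{n}$, hence lies in $\ker\mathcal D$. A product of such block polynomials in disjoint variables is again a function of differences, so it also lies in $\ker\mathcal D$. Because the blocks use disjoint variables:
\begin{itemize}
\item the monomial support of the product is the Cartesian product of the block supports, so the weights multiply;
\item the balanced coefficient of the product is the product of the blocks' balanced coefficients.
\end{itemize}

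\emph{Even $n$ (including $n=2$).} Take
\[
\Phi_{n,R}=\prod_{k=1}^{n/2}(z_{2k-1}-z_{2k})^{2R}.
\]
Each factor is homogeneous of degree $2R$, every one of its $2R+1$ binomial coefficients is nonzero, and its coefficient of $z_{2k-1}^Rz_{2k}^R$ is $(-1)^R\binom{2R}{R}\ne0$. Hence $\Phi_{n,R}$ has degree $nR$, nonzero balanced coefficient, and
\[
\wt(\Phi_{n,R})=(2R+1)^{n/2}\asymp_n R^{n/2}.
\]

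\emph{Odd $n\ge3$.} Pair the first $n-3$ variables as above and append the three-variable block
\[
T_R(z_{n-2},z_{n-1},z_n)=(z_{n-2}-z_{n-1})^R(z_{n-1}-z_n)^R(z_n-z_{n-2})^R,
\]
which is homogeneous of degree $3R$. The main point is that its balanced coefficient is nonzero. Writing $x=z_{n-2}$, $y=z_{n-1}$, $z=z_n$, we have
\[
T_R=x^Ry^Rz^R\,(1-y/x)^R(1-z/y)^R(1-x/z)^R,
\]
so $[x^Ry^Rz^R]T_R$ is the constant term of the Laurent polynomial $(1-y/x)^R(1-z/y)^R(1-x/z)^R$. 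By Dixon's identity this constant term equals $(3R)!/(R!)^3\ne0$.

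For the size of the block, $T_R$ is homogeneous of degree $3R$ in three variables, so $\wt(T_R)\le\binom{3R+2}{2}=O(R^2)$. The product therefore has
\[
\wt(\Phi_{n,R})\le(2R+1)^{(n-3)/2}\binom{3R+2}{2}=O_n\bigl(R^{(n+1)/2}\bigr).
\]

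\emph{Matching lower bound and optimality.} For both parities, define $g(\alpha):=\alpha!\,[z^\alpha]\Phi_{n,R}$. Running the computation in Lemma~\ref{lem:normalization} backwards shows that $\mathcal D\Phi_{n,R}=0$ is equivalent to the simplex relations \eqref{eq:simplex-relation}, and $g(R,\ldots,R)\ne0$. Also $|\supp g|=\wt(\Phi_{n,R})$. The lower-bound part of Theorem~\ref{thm:main} then gives $\wt(\Phi_{n,R})\ge c_nR^{\lceil n/2\rceil}$. Combined with the upper bounds above, this yields $\wt(\Phi_{n,R})\asymp_nR^{\lceil n/2\rceil}$, so the exponent $\lceil n/2\rceil$ is attained and is optimal.

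The only nontrivial input is the nonvanishing of the three-block balanced coefficient, which rests on Dixon's identity. If a self-contained argument is preferred, I would check it by the explicit sum
\[
\sum_k(-1)^k\binom{2a}{a+k}^3
\]
for even exponent $2a$ and choose the exponent parity accordingly; this changes the construction only by bounded factors.
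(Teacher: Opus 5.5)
Your even-$n$ construction coincides with the paper's. In the odd case, the upper bound $\wt(T_R)\le\binom{3R+2}{2}$ is legitimate, and so is the appeal to the independently proved lower bound of Theorem~\ref{thm:main}.

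The gap is the claim that $T_R$ has a nonzero balanced coefficient: this is false for odd $R$. Expand the three binomials. The exponents of $x,y,z$ cancel only when all three summation indices coincide, so
\[
[x^Ry^Rz^R]\,T_R=\sum_{i=0}^R(-1)^i\binom{R}{i}^3 .
\]
For odd $R$ the terms $i$ and $R-i$ cancel, so the sum is $0$. Equivalently, for odd $R$ the swap $x\leftrightarrow y$ sends $T_R$ to $-T_R$ while fixing $x^Ry^Rz^R$. Already $(x-y)(y-z)(z-x)$ has no $xyz$ term.

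Dixon's identity evaluates this sum only for $R=2a$, where it equals $(-1)^a(3a)!/(a!)^3$. The value $(3R)!/(R!)^3$ you quote is, up to the sign $(-1)^R$, the coefficient of $x^{2R}y^{2R}z^{2R}$ in $T_{2R}$ (the three-variable Dyson constant term). It is not the balanced coefficient of $T_R$. So for odd $n$ and odd $R$ your $\Phi_{n,R}$ has zero balanced coefficient. The hypothesis of Theorem~\ref{thm:main} then fails, and the proposition is not established in that case.

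Your fallback of ``choosing the exponent parity'' does not repair this. Because the blocks use disjoint variables, the balanced coefficient factorizes. The three-variable block must therefore itself be homogeneous of degree $3R$ with a nonzero coefficient of $x^Ry^Rz^R$. A cyclic block $T_{R'}$ thus forces $R'=R$, and there is no parity to choose.

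What is missing is a degree-$3R$ block whose balanced coefficient is visibly nonzero for every $R$. The paper uses $(x-y)^R(y-z)^{2R}$:
\begin{itemize}
\item The monomial $x^Ry^Rz^R$ arises only from $x^R$ in the first factor and $y^Rz^R$ in the second. This gives $(-1)^R\binom{2R}{R}$ with no cancellation.
\item The exponent triples $(s_1,\,R-s_1+s_2,\,2R-s_2)$ are distinct, so the block has exactly $(R+1)(2R+1)$ monomials.
\item Hence $\wt(\Phi_{n,R})=(R+1)(2R+1)^{(n-1)/2}$ is computed exactly, without needing Theorem~\ref{thm:main}.
\end{itemize}
With that block substituted, the rest of your argument goes through. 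Your cyclic block is valid only for even $R$.
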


\begin{proof}
If $n=2\nu$ is even, let
\begin{equation}\label{eq:even-example}
\Phi_{n,R}=
\prod_{j=1}^\nu
(z_{2j-1}-z_{2j})^{2R}.
\end{equation}
Every factor is annihilated by the sum of its two relevant partial derivatives,
so $\mathcal D\Phi_{n,R}=0$. The balanced coefficient is the product of the nonzero
middle binomial coefficients. As the factors use disjoint variable pairs,
\[
\wt(\Phi_{n,R})=(2R+1)^\nu\asymp_n R^{n/2}.
\]

If $n=2\nu+1$ is odd, let
\begin{equation}\label{eq:odd-example}
\Phi_{n,R}=
\left[\prod_{j=1}^{\nu-1}
(z_{2j-1}-z_{2j})^{2R}\right]
(z_{2\nu-1}-z_{2\nu})^R
(z_{2\nu}-z_{2\nu+1})^{2R}.
\end{equation}
Here and below an empty product is interpreted as $1$.
Its degree is
\[
2R(\nu-1)+R+2R=(2\nu+1)R=nR,
\]
and every factor in \eqref{eq:odd-example} depends only on a coordinate
difference. Hence
\[
\Phi_{n,R}(z+\lambda\ones{n})=\Phi_{n,R}(z)
\qquad(\lambda\in\mathbb R),
\]
and differentiation at $\lambda=0$ gives
$\mathcal D\Phi_{n,R}=0$. In the final three-variable block
\[
(z_{2\nu-1}-z_{2\nu})^R(z_{2\nu}-z_{2\nu+1})^{2R},
\]
the balanced monomial can arise only by taking $z_{2\nu-1}^R$ from the first
factor and $z_{2\nu}^Rz_{2\nu+1}^R$ from the second. Its coefficient is
$(-1)^R\binom{2R}{R}\ne0$. Thus the balanced coefficient of $\Phi_{n,R}$ is
nonzero, with no possible cancellation.

The final three-variable block has exactly $(R+1)(2R+1)$ monomials. Indeed,
let $0\le s_1\le R$ be the exponent of $z_{2\nu-1}$ selected from the first
binomial expansion, and let $0\le s_2\le2R$ be the exponent of $z_{2\nu}$
selected from the second. The resulting exponent triple is
\[
(s_1,\,R-s_1+s_2,\,2R-s_2),
\]
which determines $(s_1,s_2)$ uniquely. Therefore
\[
\wt(\Phi_{n,R})
=(2R+1)^{\nu-1}(R+1)(2R+1)
=(R+1)(2R+1)^\nu
\asymp_n R^{\nu+1}.
\]
Since $\nu+1=\lceil n/2\rceil$, the assertion follows.

To return to the original formulation, for either construction define
\[
g^{\mathrm{ex}}_{n,R}(\alpha):=\alpha!\,[z^\alpha]\Phi_{n,R},
\qquad \alpha\in\Delta_{nR}^{(n)}.
\]
For every $\beta\in\Delta_{nR-1}^{(n)}$, coefficient extraction gives the
explicit identity, using $(\beta+e_i)!=(\beta_i+1)\beta!$,
\[
\sum_{i=1}^ng^{\mathrm{ex}}_{n,R}(\beta+e_i)
=\beta!\,[z^\beta](\mathcal D\Phi_{n,R})=0.
\]
Moreover, $g^{\mathrm{ex}}_{n,R}(R,\ldots,R)\ne0$ and
$|\supp g^{\mathrm{ex}}_{n,R}|=\wt(\Phi_{n,R})$. Thus these polynomials attain
the growth exponent in Theorem~\ref{thm:main}.
\end{proof}

Together with the lower bound proved in Section~\ref{sec:shells},
Proposition~\ref{prop:sharp} completes the proof of
Theorem~\ref{thm:main}.


\begin{thebibliography}{99}

\bibitem{AizenmanMolchanov1993}
M.~Aizenman and S.~Molchanov,
\emph{Localization at large disorder and at extreme energies: an elementary
derivation},
Comm. Math. Phys. \textbf{157} (1993), no.~2, 245--278.

\bibitem{Anderson1958}
P.~W.~Anderson,
\emph{Absence of diffusion in certain random lattices},
Phys. Rev. \textbf{109} (1958), no.~5, 1492--1505.

\bibitem{BourgainKenig2005}
J.~Bourgain and C.~E.~Kenig,
\emph{On localization in the continuous {Anderson--Bernoulli} model in
higher dimension},
Invent. Math. \textbf{161} (2005), no.~2, 389--426.

\bibitem{BucajEtAl2019}
V.~Bucaj, D.~Damanik, J.~Fillman, V.~Gerbuz, T.~VandenBoom, F.~Wang, and
Z.~Zhang,
\emph{Localization for the one-dimensional {Anderson} model via positivity
and large deviations for the {Lyapunov} exponent},
Trans. Amer. Math. Soc. \textbf{372} (2019), no.~5, 3619--3667.

\bibitem{BLMS2022}
L.~Buhovsky, A.~Logunov, E.~Malinnikova, and M.~Sodin,
\emph{A discrete harmonic function bounded on a large portion of
$\mathbb Z^2$ is constant},
Duke Math. J. \textbf{171} (2022), no.~6, 1349--1378.

\bibitem{CarmonaKleinMartinelli1987}
R.~Carmona, A.~Klein, and F.~Martinelli,
\emph{Anderson localization for Bernoulli and other singular potentials},
Comm. Math. Phys. \textbf{108} (1987), no.~1, 41--66.

\bibitem{DamanikSimsStolz2002}
D.~Damanik, R.~Sims, and G.~Stolz,
\emph{Localization for one-dimensional, continuum,
{Bernoulli--Anderson} models},
Duke Math. J. \textbf{114} (2002), no.~1, 59--100.

\bibitem{DingSmart2020}
J.~Ding and C.~K.~Smart,
\emph{Localization near the edge for the {Anderson Bernoulli} model on the
two dimensional lattice},
Invent. Math. \textbf{219} (2020), no.~2, 467--506.

\bibitem{Drogin2025}
R.~Drogin,
\emph{Localization of one-dimensional random band matrices},
arXiv:2508.05802 (2025).

\bibitem{ElboimGloriaHernandez2025}
D.~Elboim, A.~Gloria, and F.~Hern\'andez,
\emph{Diffusivity of the {Lorentz} mirror walk in high dimensions},
arXiv:2505.01341 (2025).

\bibitem{FrohlichEtAl1985}
J.~Fr\"ohlich, F.~Martinelli, E.~Scoppola, and T.~Spencer,
\emph{Constructive proof of localization in the Anderson tight binding model},
Comm. Math. Phys. \textbf{101} (1985), no.~1, 21--46.

\bibitem{FrohlichSpencer1983}
J.~Fr\"ohlich and T.~Spencer,
\emph{Absence of diffusion in the Anderson tight binding model for large
disorder or low energy},
Comm. Math. Phys. \textbf{88} (1983), no.~2, 151--184.

\bibitem{GerminetKlein2013}
F.~Germinet and A.~Klein,
\emph{A comprehensive proof of localization for continuous {Anderson}
models with singular random potentials},
J. Eur. Math. Soc. \textbf{15} (2013), no.~1, 53--143.

\bibitem{GesselViennot1985}
I.~Gessel and G.~Viennot,
\emph{Binomial determinants, paths, and hook length formulae},
Adv. Math. \textbf{58} (1985), no.~3, 300--321.

\bibitem{KenigSilvestreWang2015}
C.~E.~Kenig, L.~Silvestre, and J.-N.~Wang,
\emph{On {Landis}' conjecture in the plane},
Comm. Partial Differential Equations \textbf{40} (2015), no.~4, 766--789.

\bibitem{Krymskii2024}
S.~T.~Krymskii,
\emph{On the lowest possible dimension of supports of solutions to the
discrete {Schr\"odinger} equation},
arXiv:2401.02800 (2024).

\bibitem{KunzSouillard1980}
H.~Kunz and B.~Souillard,
\emph{Sur le spectre des op\'erateurs aux diff\'erences finies al\'eatoires},
Comm. Math. Phys. \textbf{78} (1980), no.~2, 201--246.

\bibitem{Li2D2022}
L.~Li,
\emph{{Anderson--Bernoulli} localization at large disorder on the {2D} lattice},
Comm. Math. Phys. \textbf{393} (2022), no.~1, 151--214.

\bibitem{LiCylinder2026}
L.~Li,
\emph{Polynomial bound for the localization length of the {Lorentz} mirror
model on the {1D} cylinder},
arXiv:2010.05900v3 (2026).

\bibitem{li2021manhattan}
L.~Li,
\emph{On the {Manhattan} pinball problem},
Electron. Commun. Probab. \textbf{26} (2021), Paper No.~25, 11 pp.

\bibitem{LiThesis2022}
L.~Li,
\emph{Anderson--Bernoulli localization on 2D and 3D lattice},
Ph.D. thesis, University of Pennsylvania, 2022.

\bibitem{LiSupport2026}
L.~Li,
\emph{On support cardinality for the discrete {Schr\"odinger} equation},
Lett. Math. Phys. \textbf{116} (2026), no.~4, Paper No.~90.

\bibitem{LiZhang3D}
L.~Li and L.~Zhang,
\emph{{Anderson--Bernoulli} localization on the three-dimensional lattice and
discrete unique continuation principle},
Duke Math. J. \textbf{171} (2022), no.~2, 327--415.

\bibitem{Lindstrom1973}
B.~Lindstr\"om,
\emph{On the vector representations of induced matroids},
Bull. London Math. Soc. \textbf{5} (1973), no.~1, 85--90.

\bibitem{LogunovEtAl2025}
A.~Logunov, E.~Malinnikova, N.~Nadirashvili, and F.~Nazarov,
\emph{The {Landis} conjecture on exponential decay},
Invent. Math. \textbf{241} (2025), no.~2, 465--508.

\bibitem{Meshkov1992}
V.~Z.~Meshkov,
\emph{On the possible rate of decay at infinity of solutions of second order
partial differential equations},
Math. USSR-Sb. \textbf{72} (1992), no.~2, 343--361.

\bibitem{ShubinVakilianWolff1998}
C.~Shubin, R.~Vakilian, and T.~Wolff,
\emph{Some harmonic analysis questions suggested by
{Anderson--Bernoulli} models},
Geom. Funct. Anal. \textbf{8} (1998), no.~5, 932--964.

\bibitem{Wegner1981}
F.~Wegner,
\emph{Bounds on the density of states in disordered systems},
Z. Phys. B \textbf{44} (1981), 9--15.

\bibitem{YauYin2025}
{\raggedright
H.-T.~Yau and J.~Yin,
\emph{Delocalization of one-dimensional random band matrices},
arXiv:2501.01718 (2025).\par}

\end{thebibliography}
\end{document}